\newtheorem{lemma}{Lemma}
\newtheorem{open}{Open problem}
\definecolor{darkgreen}{rgb}{0.0, 0.27, 0.13}
\definecolor{forestgreen}{rgb}{0.0, 0.27, 0.13}
\definecolor{shellcolor}{RGB}{205,192,176}
\definecolor{bashcolor}{RGB}{205,192,176}
\definecolor{makecolor}{RGB}{192,192,192}
\definecolor{flexcolor}{RGB}{180,205,205}
\definecolor{yacccolor}{RGB}{205,181,205}
\definecolor{termcolor}{gray}{0.80}
\definecolor{tracecolor}{gray}{0.90}
\definecolor{gcccolor}{RGB}{224,255,255}
\definecolor{algocolor}{RGB}{144,238,144}
\definecolor{pythoncolor}{RGB}{144,238,144}
\definecolor{asmcolor}{RGB}{230,230,250}
\lstdefinestyle{gcc}{%
frame=single,
xleftmargin=2em,
language=c,
numbers=left,
numberstyle=\footnotesize,
columns=[l]flexible,
tabsize=4,
showtabs=false,
backgroundcolor=\color{gcccolor},
stringstyle=\color{red}\itshape,
showstringspaces=false,
keywordstyle=\color{blue}\bfseries,
commentstyle=\color{forestgreen}\itshape,
emphstyle=\color{red}\bfseries,
emph={[2]os,sys},
emphstyle={[2]\color{pink}\bfseries},
literate={{:cup:}{$\ \cup\ $}1{:lamb:}{$\lambda$}1{:su:}{$s_u$}1{:rond:}{$\circ$}1{:in:}{$\in$}1{:=}{$\gets$}1{:notin:}{$\not\in$}1{:emptyset:}{$\emptyset$}1}
}
\def\rmc(#1,#2){RM(#1,#2)}
\def\form(#1,#2){H(#1,#2)}
\def\val(#1,#2){V(#1,#2)}
\def\boole(#1){ B(#1) }
\def\fd{{\mathbb F}_2}
\def\fdm{{\mathbb F}^m_2}
\def\aglm{{\textsc{agl}}(m,2)}
\def\agl#1#2{\textsc{agl}(#1,#2)}
\def\stab(#1){\textsc{stab}(#1)}
\def\stablevel(#1,#2){\textsc{stab}_{#1}(#2)}
\def\cls{{\rm n}}
\def\rmq(#1,#2,#3){\rmc(#1,#2)/\rmc(#3,#2)}
\def\level(#1){\underset{#1}{=}}
\newcommand{\binomial}[2]{\genfrac{(}{)}{0pt}{}{#1}{#2}}
\newcommand{\card}[1]{\vert{#1}\vert}
\def\val(#1){{\rm val}(#1)}
\def\agl#1{{\mathfrak #1}}
\def\orb#1{{\mathcal O}_#1}
\def\anf(#1){{\rm anf}(#1)}
\def\fix(#1,#2,#3,#4){{{\rm fix}^{#1,#2}_{#3}}{(#4)}}
\def\classe(#1,#2,#3){{{\mathcal T}(#1,#2,#3)}}
\def\rset#1{{\mathcal #1}}
\def\stab(#1){\textsc{stab}(#1)}
\def\stablevel(#1,#2){\textsc{stab}^{#1}(#2)}
\def\stableveldim(#1,#2,#3){\textsc{stab}^{#1}_{#2}(#3)}
\def\stableveldeg(#1,#2,#3,#4){\textsc{stab}^{#1,#2}_{#3}(#4)}
\def\level(#1){\underset{#1}{\sim}}
\def\bound(#1,#2){\underset{#1}{\overset{#2}{\sim}}}
\def\modulo(#1,#2){\mod\rmc(#1,#2)}
\def\pow#1.#2{\tiny$10^{#1.#2}$}
\begin{document}
\title{Classification of some cosets of the Reed-Muller code} 

\author[1]{Valérie Gillot}
\email{valerie.gillot@univ-tln.fr}
\author[]{Philippe Langevin}
\email{philippe.langevin@univ-tln.fr}
\address{Imath, universit\'e de Toulon}
\thanks{This work is partially supported by the French Agence Nationale de la Recherche through the SWAP project under Contract ANR-21-CE39-0012}

	\begin{abstract} 
	This paper presents a descending method to classify Boolean functions in 7 variables under the action of the affine general linear group. The classification determines the number of classes, a set of orbits representatives and a generator set of the stabilizer of each representative. The method consists in the iteration of the classification process of $\rmq( k, m,r-1)$ from that of $\rmq( k, m,r)$. We namely obtain the classifications of $\rmq( 4, 7,2)$ and of $\rmq(7,7,3)$, from which we deduce some consequences on the covering radius of $\rmc(3,7)$ and the classification of near bent functions. 
	
\end{abstract}

\maketitle

\section{Introduction}

Let $\fd$ be the finite field of order $2$. Let $m$ be a positive integer. 
A mapping from $\fdm$ into $\fd$ is called a Boolean function. Every Boolean 
function  has a unique algebraic reduced representation :
$$
f(x_1, x_2, \ldots, x_m ) = f(x) = \sum_{S\subseteq \{1,2,\ldots, m\}} a_S X_S,
\quad a_S\in\fd, \ X_S( x ) = \prod_{s\in S} x_s.
$$
The degree of $f$ is the maximal cardinality of $S$ with  $a_S=1$ in
the algebraic form.  The valuation of $f\not=0$, denoted by $\val(f)$,
is the minimal cardinality of $S$ for
which $a_S=1$. Conventionally,  $\val(0)$ is $\infty$. 
We denote by $\boole(s, t, m)$ the space of Boolean
functions of valuation greater than or equal to $s$ and of degree less 
than or equal to $t$. Note that  $\boole(s,t,m)=\{0\}$ whenever $s>t$.
The space $\boole(0, t, m)$ identifies with the Reed-Muller code
$\rmc(t,m)$ 
and $\boole(s, t, m)$ is the representation 
of the quotient space $\rmq( t, m,s-1)$. The affine general 
linear group of $\fdm$, denoted by $\aglm$, acts naturally 
over all these spaces. The number of classes
of $\boole( s, t, m)$, denoted by  $\cls(s, t , m)$, 
satisfies a nice duality relation~:

\begin{equation}
	\label{CLASSREL}
	\cls(s, t, m ) = \cls( m-t, m-s, m).
\end{equation}

X.-D.~Hou gives a proof of the above relation in
\cite{AGLHOU}. In the proof of Lemma \ref{DUALITY}, 
we propose an alternative demonstration.

For the dimensions that we want to consider, all
class numbers are very easy to determine 
using  Burnside's Lemma and the theory of conjugacy 
classes of $\aglm$, see e.g. \cite{BOOK}.

In general, such a class number is huge, but, 
when it is reasonably small, one may consider 
to determine an orbit representative 
set that is a list of $\cls( s, t, m )$ Boolean functions,
of degree less than or equal to $t$, and  pairwise non affine 
equivalent modulo $\rmc(s-1,m)$. As an example,  the class 
number $\cls( 2, 6, 6)$ is 
$150357$ and J.~Maiorana in \cite{MAIORANA} describes a recursive algorithm 
to find the 150357 equivalence classes.

More generally, the classification data
of the space $\boole(s, t, m )$ plays an important role
both in coding theory and cryptography. The covering
radii of Reed-Muller codes are not generally known and the
classification of $\boole(s, t, m )$  can be used to bound the covering 
radius of $\rmc(s-1 , m)$ in $\rmc(t,m)$ as in the paper \cite{WANG}. These  
classifications are also used to study the cryptographic
parameters of Boolean functions.

This paper presents a procedure to provide 
classifications of Boolean functions spaces
for $m=7$. Precisely, we compute orbit representative sets
of $\boole(s,t,7)$, for all parameters $s\leq t\leq 7$
such that  $\cls(s , t , 7)$ is less than $10^6$. 

Our approach gives \emph{complete} 
classifications : not only sets of orbit representatives,
but also for each representative, a generator set of stabilizer group.
The most interesting cases are the classifications of $\boole(3,4,7)$ and of $\boole(4,7,7)$. From the first one, we determine the classification of near bent functions. From the second, we refind, with an alternative method, the covering radius of $\rmc(3,7)$ obtained in \cite{COVERING}.

All computed data are available on the project page \cite{PROJECT}.

\section{Boolean functions}

A Boolean function $f$ is a member
of $\boole(s,t,m)$ if and only if $s\leq \val(f)$ and  $\deg(f)\leq t$. 
Denoting $\bar S$ the complement  set of $S\subseteq\{1,2,\ldots,m\}$, 
the complementary transform $\sum_S X_S \mapsto \sum_S X_{\bar S}$ maps 
$\boole(s,t,m)$ onto $\boole(m-t,m-s,m)$, in particular, 
these spaces have the same dimension. A Reed-Muller code of order $k$ in $m$ variables is the space of Boolean functions of degree less or equal to $k$~:
$$\rmc(k,m)=\{ f\in \boole(m) \mid \deg(f) \leq k\}.$$
Note that Reed-Muller spaces are nested~:
\begin{equation*}
	\underbrace{\rmc(-1,m)}_{(0)}\subset \rmc(0,m) \subset \rmc(1,m) \subset \cdots 
	\subset \rmc(m-1,m) \subset \underbrace{\rmc(m,m)}_{\boole(m)}.
\end{equation*}

The quotient space $\rmq( k, m, k-1)$ is the space of homogeneous forms of degree
$k$, identified with the space $\boole(k,k,m)$, its dimension is the value of the binomial 
coefficient $\binomial mk$. The dimension 
of $\boole(s,t,m)$ is equal to the sum of binomial coefficients 
$\sum_{k=s}^t \binomial mk$.  It is easy to see that the weight
of a Boolean function is even if and only if its degree is not maximal,
consequently the orthogonal of $\rmc(k,m)$ is  $\rmc(m-k-1,m)$,
with respect to  the scalar product $\langle f, g \rangle = \sum_{x\in\fdm} f(x) g(x)$.

\begin{lemma}[duality]
\label{DUALITY}
	For all $s,t$  such that $s\leq t\leq m$, $\boole(m-t,m-s,m)$ is
	a representation of $\boole(s,t,m)^\ast$,  the dual space of $\boole(s,t,m)$.
	It means that for any form $\phi\in\boole(s,t,m)^\ast$ there exists
	one and only one $g\in\boole(m-t,m-s,m)$ such that 
	$\phi(f) =\langle f, g \rangle$, for all $f\in\boole(s,t,m)$.
\end{lemma}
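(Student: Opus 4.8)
The plan is to read the statement as the non-degeneracy of the pairing $\langle\,\cdot\,,\cdot\,\rangle$ restricted to $\boole(s,t,m)\times\boole(m-t,m-s,m)$. I introduce the linear map $\Phi\colon\boole(m-t,m-s,m)\to\boole(s,t,m)^\ast$ sending $g$ to the form $f\mapsto\langle f,g\rangle$. The complementary transform already shows that these two spaces share the dimension $\sum_{k=s}^{t}\binomial mk$, so it suffices to prove that $\Phi$ is injective; equivalently, that the Gram matrix of the pairing, computed in the monomial bases $\{X_S : s\le\card S\le t\}$ and $\{X_T : m-t\le\card T\le m-s\}$, is invertible over $\fd$.

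First I would evaluate a single entry. Since $x_u^2=x_u$ over $\fd$, one has $X_S(x)X_T(x)=X_{S\cup T}(x)$, hence
$$
\langle X_S, X_T\rangle=\sum_{x\in\fdm}X_{S\cup T}(x)=2^{\,m-\card{S\cup T}}\bmod 2,
$$
which equals $1$ exactly when $S\cup T=\{1,2,\ldots,m\}$, that is, when $\bar S\cap\bar T=\emptyset$. I would then reindex the columns by the complementary transform, writing $T=\bar U$ with $s\le\card U\le t$: the condition $\bar S\cap\bar T=\emptyset$ becomes $U\subseteq S$, so the matrix entry is the inclusion indicator, equal to $1$ if $U\subseteq S$ and $0$ otherwise. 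This is the zeta matrix of the subset lattice, restricted to sets of cardinality between $s$ and $t$. Ordering rows and columns by increasing cardinality, a nonzero entry forces $\card U\le\card S$ with equality only for $U=S$, so the matrix is block lower triangular with identity diagonal blocks; its determinant is $1$, and $\Phi$ is an isomorphism. This proves that every $\phi\in\boole(s,t,m)^\ast$ is represented by one and only one $g\in\boole(m-t,m-s,m)$.

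To recover the announced alternative proof of~\eqref{CLASSREL}, I would observe that the pairing is $\aglm$-invariant: for $\sigma\in\aglm$, the change of variable $x\mapsto\sigma^{-1}x$ gives $\langle f\circ\sigma,g\circ\sigma\rangle=\langle f,g\rangle$. Consequently $\Phi$ is $\aglm$-equivariant for the natural action on the source and the contragredient action on $\boole(s,t,m)^\ast$, so $\boole(m-t,m-s,m)$ and $\boole(s,t,m)^\ast$ are isomorphic as $\aglm$-modules. Since, for any $\sigma$, the fixed subspaces of $\sigma$ on a finite module and on its dual have equal dimension, Burnside's Lemma yields equal orbit counts, i.e.\ $\cls(s,t,m)=\cls(m-t,m-s,m)$.

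The main obstacle is concentrated in the parity evaluation of the second paragraph: once $\langle X_S,X_T\rangle$ is seen to detect exactly the pairs with $\bar S\cap\bar T=\emptyset$, the unitriangular structure of the inclusion matrix makes invertibility over $\fd$ immediate, with no dimension-specific argument. For the passage to~\eqref{CLASSREL} the delicate point is instead verifying the invariance of the pairing and the coincidence of fixed-point counts on a module and its dual, which is what upgrades the linear isomorphism to an equality of class numbers.
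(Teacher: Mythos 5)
Your proof is correct and takes essentially the same route as the paper: both use the complementary transform for the dimension count and then prove injectivity of $g\mapsto\langle\,\cdot\,,g\rangle$ from the fact that monomial pairings detect inclusion ($\langle X_{\bar S},X_{S'}\rangle=1$ iff $S\subseteq S'$), with a maximal-degree term supplying the nonzero ``diagonal'' entry. The paper runs this argument per element --- for $g\neq 0$ it pairs $g$ against $X_{\bar S}$ where $X_S$ is a monomial of maximal degree in $g$, invoking the parity-of-weight criterion for degree $m$ --- whereas you package the identical computation as unitriangularity of the Gram matrix; this is a difference of presentation, not of substance.
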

\begin{proof}
	Note that the dimension of $\boole(m-t,m-s,m)$ is precisely
	the dimension of $\boole(s,t,m)$. 
	If $0 \not= g\in\boole(m-t,m-s,m)$ then
	$g\not\in\boole(s, t, m)^\perp$. Indeed, consider
	a monomial term $X_S$ of maximal degree in the algebraic
	representation 
	of $g$~: 
	$$m-t\leq \deg(g)=\deg(X_S) \leq m-s \quad\text{and} \quad s\leq \deg(X_{\bar S}) \leq t.$$
	The product $X_{\bar S}g=X_{\bar S}X_{S}+\cdots$ has degree $m$
	whence $X_{\bar S}$ is member
	of $\boole(s, t, m)$ which is not orthogonal to 
	$g$. In other words, the space $\boole(m-t,m-s,m)$ 
	is a representation of $\boole(s,t,m)^\ast$.
\end{proof}

\section{Action of the affine general linear group}

First, let us recall some definitions. Let $(G,*)$ be a finite group and let $U$ be a finite set, a right group action of $G$ on $U$ is a mapping from $U \times G $ to $U$ denoted by $(u,g)  \longmapsto u\circ g$ satisfying $u\circ e =u$  and $(u \circ g) \circ h = u \circ (g*h)$, for $u\in U$, $g,h\in G$ and $e$ the identity of $G$.
The orbit of an element $u$  is the set of elements in $U$ to which $u$ can be moved by the elements of $G$, denoted by $\orb u = \{ u\circ g \mid g \in G \}$.
The stabilizer subgroup of $G$ with respect to $u\in U$ is the set of elements in $G$ that fixes $u$, denoted by $\stab(u)=\{ g \in G \mid u \circ g = u \}$.

The affine general linear group acts naturally on the right
over Boolean functions. The action of $\agl s\in\aglm$
on a Boolean function $f$ is $f\circ \agl s$, the composition of applications. The order of $\aglm$  is $2^m\prod_{i=0}^{m-1} (2^m - 2^i )
\approx 0.29\ 2^{m^2+m}$. Note that
the number of orbits of this group action  has doubly exponential growth 
with the parameter $m$. For $m=7$, it is already numerically 
impossible to list the $\approx 2^{74}$ classes of Boolean functions !  

The Reed-Muller spaces are invariant under the action
of $\aglm $.
Considering the action modulo $\rmc(r,m)$, the space of functions of degree less or equal than $r$, we introduce objects at level $r$.
Two  Boolean functions $f$ and $g$ in $m$ 
variables are equivalent at level $r$,
if there exists $\agl s\in\aglm$ such that   $ f\circ \agl s\equiv g \mod \rmc(r,m)$.
We introduce two notations  $f\level(r)g$ for the equivalence
at level $r$, and $\stableveldim( r, m, f)$, for the
stabilizer of $f$ at level $r$ :
\begin{equation}
	\label{EQUIV}
	f\level(r)g \Longleftrightarrow \exists \agl s \in \aglm,\ f\circ \agl s\equiv g \mod \rmc(r,m).
\end{equation}

\begin{equation}
	\label{STABILIZER}
	\stableveldim( r, m, f) =\{ \agl s\in\aglm \mid f\circ \agl s  \equiv f \mod \rmc(r,m) \}.
\end{equation}
In this paper, we consider the action of $\aglm$ over $\boole(s,t,m)$ as the composition of applications modulo $\rmc(s-1,m)$. Precisely, two elements $f,g\in \boole(s,t,m)$ are in the same orbit, under this action, if and only if they are equivalent at level $s-1$, that is $f\level(s-1) g$. In this context, the stabilizer of $f$ is nothing but $\stableveldim(s-1,m,f)$ the stabilizer at level $s-1$.

 Thus, the  affine general linear group acts over the $\boole(s, t, m)$,
the corresponding class number $\cls(s,t,m)$ is given by 
Burnside's formula~:

\begin{equation}
\label{BURNSIDE}
	\card{\aglm} \times \cls(s,t,m) = \sum_{\agl s\in \aglm} \sharp\fix(s,t,m,\agl s)
	= \sum_{\agl s\in \Gamma} R(\agl s)\,\sharp\fix(s,t,m,\agl s).
\end{equation}
where $\fix(s,t,m, \agl s)$ is the set $\{ f\in \boole(s,t,m) \mid f\circ \agl s \equiv f \mod \rmc(s-1,m)\}$, i.e. the kernel of the endomorphism 
of $\boole(s,t,m)$ defined by $f\mapsto f\circ\agl s$. In practice, we reduce the sum to the
$\Gamma$, a set of representatives of conjugacy classes of
$\aglm$, and $R(\agl s)$ the size of the conjugacy class of $\agl s$,
see book \cite{BOOK} for the finite fields combinatoric details.
\begin{lemma}[formula]
	For all $s,t$  such that $s\leq t\leq m$, 
        $$\cls(s,t,m)  = \cls(m-t,m-s,m)$$
\end{lemma}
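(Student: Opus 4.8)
The plan is to derive the class-number identity $\cls(s,t,m)=\cls(m-t,m-s,m)$ from Burnside's formula~\eqref{BURNSIDE} by setting up a bijection between the two counting problems that preserves the number of fixed points of each group element. The starting observation is that the complementary transform $\sum_S X_S \mapsto \sum_S X_{\bar S}$, already introduced just before Lemma~\ref{DUALITY}, is a linear isomorphism carrying $\boole(s,t,m)$ onto $\boole(m-t,m-s,m)$; combined with the duality of Lemma~\ref{DUALITY}, it lets me transport the $\aglm$-action on one space to an action on the other.

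First I would rewrite both class numbers via Burnside~\eqref{BURNSIDE}: for each $\agl s\in\aglm$ the space $\boole(s,t,m)$ carries the endomorphism $f\mapsto f\circ\agl s$ taken modulo $\rmc(s-1,m)$, and $\sharp\fix(s,t,m,\agl s)$ is the cardinality of its kernel. Dually, $\boole(m-t,m-s,m)\cong\boole(s,t,m)^\ast$ by Lemma~\ref{DUALITY}, and the natural action on the dual is by the transpose-inverse endomorphism. The key linear-algebra fact I would invoke is that an endomorphism of a finite-dimensional $\fd$-vector space and its dual (transpose) have the same rank, hence kernels of the same dimension; so the number of fixed points of $\agl s$ on $\boole(s,t,m)$ equals the number of fixed points of the associated operator on $\boole(m-t,m-s,m)$. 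I must check that the action induced on the dual space, when transported through the complementary transform, is again a genuine right $\aglm$-action given by composition modulo $\rmc(m-t-1,m)$, so that its fixed-point counts are exactly the $\sharp\fix(m-t,m-s,m,\cdot)$ appearing in Burnside's formula for $\cls(m-t,m-s,m)$.

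Since $\card{\aglm}$ is the same on both sides of~\eqref{BURNSIDE}, matching the fixed-point counts term by term over all of $\aglm$ immediately yields the equality of the two sums, hence $\cls(s,t,m)=\cls(m-t,m-s,m)$. The cleanest way to make the term-by-term match precise is to show that for each $\agl s$, the endomorphism defining $\fix(m-t,m-s,m,\agl s)$ is the adjoint (with respect to $\langle\cdot,\cdot\rangle$) of the one defining $\fix(s,t,m,\agl s)$, possibly after reindexing the group by $\agl s\mapsto \agl s^{-1}$, which permutes $\aglm$ and so does not change the total sum.

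The main obstacle I anticipate is bookkeeping the quotients correctly: the action on $\boole(s,t,m)$ is only well defined modulo $\rmc(s-1,m)$, and I must verify that the orthogonality relation $\rmc(k,m)^\perp=\rmc(m-k-1,m)$ makes the pairing of Lemma~\ref{DUALITY} descend to these quotients so that the dual of "fixed modulo $\rmc(s-1,m)$" is exactly "fixed modulo $\rmc(m-t-1,m)$". Confirming that $\agl s$ and its adjoint act compatibly across this identification — i.e. that $\langle f\circ\agl s,\, g\rangle=\langle f,\, g\circ\agl s^{-1}\rangle$ holds modulo the relevant Reed--Muller subspaces — is the delicate point; once it is established, the rank argument closes the proof without further computation.
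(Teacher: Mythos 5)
Your proposal is correct and follows essentially the same route as the paper: both arguments run Burnside's formula on each side, use the adjoint identity $\langle f\circ\agl s, g\rangle=\langle f, g\circ\agl s^{-1}\rangle$ together with Lemma~\ref{DUALITY} to match $\sharp\fix(s,t,m,\agl s)$ with $\sharp\fix(m-t,m-s,m,\agl s^{-1})$, and then reindex the group sum by $\agl s\mapsto\agl s^{-1}$. The only difference is presentational: you invoke the equality of the rank of an endomorphism and of its transpose directly, whereas the paper derives the same kernel-size equality via a character sum over $\boole(s,t,m)\times\boole(s,t,m)^\ast$.
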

\begin{proof}
The number of orbits of a finite space $E$ under  the action of a subgroup $G$ of the general linear group $\text{aut}(E)$  is the same that the number of orbits of the dual group $G^\ast$. The Lemma statement is a particular case of this result.
For $\agl s\in\aglm$, the adjoint of 
	the automorphism $f\mapsto f\circ\agl s$  corresponds to
	the inverse of $\agl s$, because 
	$$
	\langle f\circ\agl s, g \rangle = \sum_{x\in\fdm} f\circ\agl s(x) g(x) = \sum_{x\in\fdm} f(x) g\circ{\agl s}^{-1} (x) =
	\langle f, g\circ{\agl s}^{-1}\rangle.
	$$
The result follows using Burnside's formula by observing 
\begin{align*}
	\sum_{\substack{f\in \boole(s,t,m)\\g\in \boole(s,t,m)^\ast}} (-1)^{   \langle f\circ\agl s + f, g \rangle}& =\sum_{\substack{f\in \boole(s,t,m)\\g\in \boole(s,t,m)^\ast}} (-1)^{\langle f, g\circ {\agl s}^{-1} +g \rangle}
\\
\sharp \boole(s,t,m)^\ast \times \sharp\fix(s,t,m,\agl s)&= \sharp \boole(s,t,m) \times \sharp\fix(m-t,m-s,m,{\agl s}^{-1})\\
\end{align*}

\end{proof}

In this paper, by a classification at level $r$ of degree $k$
in $m$ variables, we mean a classification of $\boole(r+1,k,m)$,  that is a set of orbit representatives 
at level $r$ under the right action of $\aglm$, and for each 
orbit representative $f$, a  generator set of $\stableveldim( r, m, f)$, the stabilizer of $f$ at level $r$.   It is important
to note that at level $r$, we calculate modulo $\rmc(r,m)$,
and we consider polynomials whose valuations are strictly 
greater than $r$.

Recall that $\aglm$ can be  generated by three following transformations of $v=(v_m,v_{m-1},\ldots, v_1)$~: 
the shift operator $S\colon v \mapsto (v_{m-1}, \ldots, v_1, v_m)$,
the transvection $T\colon v \mapsto (v_m,\ldots,v_2,v_1+v_2)$
and the translation $U\colon v \mapsto v+(0,\ldots,0,1)$.

In next section, we detail the procedure that 
we used to build  a classification at level $r-1$
from a classification at level $r$. Starting at level $k$, there is only one orbit $\{0\}=\boole(k+1,k,m)$
stabilized by full group $\aglm =\langle S,T,U\rangle$. One can start from this
classification at level $k$ to determine the classifications 
at level $k-1$, level $k-2$, etc. The process can be stopped at any level or be continued  until level $-1$ to reach the classfication of $\boole(0,k,m)=\rmc(k,m)$. In this way,  we classify $\boole(s,t,m)$ in $t-s+1$ iterations starting from the classification of $\boole(t+1,t,m)=\{0\}$.

\section{Descending procedure}

In order to deduce a classification at level $r-1$ 
from a classification a level $r$, we have to consider 
some  ``boundary actions'' on $\boole(r,r,m)$ the space of 
homogeneous forms of degree $r$. 

An element $\agl s$ of the stabilizer of $f$ at level $r$ induces an action 
on homogeneous forms of degree $r$
defined  for $u\in \boole(r,r,m)$ by  $$ u \mapsto u\circ\agl s + f\circ{\agl s} +f \modulo(r-1,m) $$

\begin{lemma}[boundary]
	\label{BOUNDARY}
	Let $\rset R$ be a set of orbit representatives of degree
	$k$ at level $r$. For each $f\in\rset R$, $\rset U(f)$ denotes
	a set of orbit representatives of $\boole(r,r,m)$ under  
	the boundary action of $\stableveldim(r,m, f)$. We obtain that  $\{ f+u\mid f\in\rset R, u\in\rset U(f) \}$ is a set of orbit 
	representatives with same degree at level $r-1$.
\end{lemma}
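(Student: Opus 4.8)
The plan is to exploit the two-stage structure of the construction: an element of $\boole(r,k,m)$ is first reduced, through its part of degree strictly larger than $r$, to a representative $f\in\rset R$ coming from the level-$r$ classification, and only then is its homogeneous degree-$r$ part reduced, using \emph{only} the stabilizer $\stableveldim(r,m,f)$, to a representative $u\in\rset U(f)$. Concretely I would write any $h\in\boole(r,k,m)$ as $h=g+u_0$, where $g$ collects the monomials of degree $>r$, so that $g\in\boole(r+1,k,m)$, and $u_0\in\boole(r,r,m)$ is the homogeneous part of degree $r$. Since adding a form of degree $r$ to $g$ (whose valuation exceeds $r$) does not change the degree, every constructed element $f+u$ has the same degree as its $f$ and lies in $\boole(r,k,m)$, which already accounts for the ``same degree at level $r-1$'' claim.

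The crux, which I would establish first, is the following compatibility identity: for every $\agl s\in\stableveldim(r,m,f)$ and every $v\in\boole(r,r,m)$,
\begin{equation*}
(f+v)\circ\agl s \equiv f + \bigl(v\circ\agl s + f\circ\agl s + f\bigr) \modulo(r-1,m).
\end{equation*}
Indeed, since $\agl s$ stabilizes $f$ at level $r$, the difference $f\circ\agl s+f$ lies in $\rmc(r,m)$, so the part of degree $>r$ of $(f+v)\circ\agl s$ is exactly $f$, while $v\circ\agl s$ and $f\circ\agl s+f$ both have degree $\le r$; reducing modulo $\rmc(r-1,m)$ keeps only their degree-$r$ component, which is precisely the boundary action of $\agl s$ on $v$. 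Thus acting at level $r-1$ on $f+v$ by an element of $\stableveldim(r,m,f)$ fixes the top part $f$ and moves $v$ along its boundary orbit; the characteristic-$2$ cancellation $f\circ\agl s+f\circ\agl s=0$ is also what makes the boundary map a genuine right action of $\stableveldim(r,m,f)$, so that speaking of $\rset U(f)$ as orbit representatives is legitimate.

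For completeness I would apply the level-$r$ classification to $g$: choose $\agl a\in\aglm$ with $g\circ\agl a\equiv f\modulo(r,m)$ for the unique $f\in\rset R$ with $g\level(r)f$. Tracking degrees in $h\circ\agl a=g\circ\agl a+u_0\circ\agl a$, the part of degree $>r$ equals $f$ and everything else has degree $\le r$, hence $h\circ\agl a\equiv f+v\modulo(r-1,m)$ for some $v\in\boole(r,r,m)$. Picking $\agl s\in\stableveldim(r,m,f)$ that sends $v$ to its boundary representative $u\in\rset U(f)$ and invoking the identity above yields $h\level(r-1)f+u$, because composing with $\agl s$ destroys all terms of degree $\le r-1$. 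This shows the proposed set is exhaustive.

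For irredundancy, suppose $f+u\level(r-1)f'+u'$ via some $\agl b\in\aglm$. Reducing the congruence modulo $\rmc(r,m)$ (which is coarser than $\rmc(r-1,m)$) kills $u\circ\agl b$ and $u'$, leaving $f\circ\agl b\equiv f'\modulo(r,m)$, i.e. $f\level(r)f'$, whence $f=f'$ because $\rset R$ is irredundant; the same reduction then shows $\agl b\in\stableveldim(r,m,f)$. The compatibility identity now forces $u'$ to be the boundary image of $u$ under $\agl b$, so $u=u'$ because $\rset U(f)$ is irredundant, and the representatives are pairwise distinct at level $r-1$. The main obstacle is the compatibility identity: it is the single place where one must carefully separate the degree-$>r$, degree-$r$, and lower parts under composition and verify that the level-$r$ stabilizer condition is exactly what is needed for the boundary action to describe the residual freedom; once it is in hand, completeness and irredundancy follow by straightforward degree bookkeeping.
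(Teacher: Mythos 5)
Your proof is correct and follows essentially the same route as the paper's: completeness by first invoking the level-$r$ classification to normalize the part of degree $>r$ and then using the boundary action of $\stableveldim(r,m,f)$ to move the residual degree-$r$ form to its representative, and irredundancy by reducing the level-$(r-1)$ congruence modulo $\rmc(r,m)$ to force $f=f'$, hence $\agl b\in\stableveldim(r,m,f)$, and then invoking irredundancy of $\rset U(f)$. Your ``compatibility identity'' is just the paper's definition of the boundary action made explicit (it holds exactly in characteristic 2), so this is elaboration of the same argument rather than a different method.
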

\begin{proof}
	We start by showing the elements of this set are not equivalent at level $r-1$.
	Indeed, let  $f'$ and $f$ be in $\rset R$, and two forms
	$u'\in\rset U(f')$ and $u \in\rset U(f)$ such that $f+u\level(r-1) f'+u'$.  
	There exists $\agl  s\in\aglm$ such that $f'+u'\equiv(f+u)\circ\agl  s \modulo(r-1,m)$. 
	Reducing more, we obtain 
	$f' \equiv f\circ \agl s \modulo(r,m)$; so that $f'$ and $f$ are
	equivalent at level $r$, thus $f'=f$. The boundary action of $\agl s\in\stableveldim(r,m,f)$
	sends $u$ to $u'$ and finally $u'=u$.
	Now, we prove that the  set represents all polynomials at level $r-1$. Indeed,
	for $g\in\boole(r-1, k, m)$, there exists a 
	pair $(\agl t, f)\in\aglm\times\rset R$ 
	such that $g\circ \agl t \equiv f\modulo(r,m)$, whence 
	$g\circ \agl t \equiv f+v\modulo(r-1,m)$, 
	where $v$ is a form of degree $r$.
	Moreover, there is a boundary action $\agl s\in\stableveldim(r,m,f)$ that sends
	$v$ to some $u\in\rset U(f)$ 
	whence $g \circ \agl t\agl s \equiv (f+v)\circ\agl s \equiv f+u\modulo(r-1,m) $. 
\end{proof}

For a right action of a group $G$ on a set $U$ and $u\in U$, we denote by $\orb u$  the orbit of $u$, 
$S_u$ the stabilizer of $u$ and
$s_u$ the order of $S_u$.

\begin{lemma}[class formula] 
	\label{CLASS} 
	If $G$ is a finite group acting  on a finite set $U$ then
	the size of the orbit of an element $u\in U$ is equal to $\card G / s_u$.
\end{lemma}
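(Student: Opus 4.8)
The plan is to establish an explicit bijection between the orbit $\orb u$ and the set of right cosets of the stabilizer $S_u$ in $G$, from which the formula follows by Lagrange's theorem. First I would define a map $\Phi$ sending a right coset $S_u * s$ to the orbit element $u\circ s$, and the whole proof then reduces to verifying that $\Phi$ is a well-defined bijection.

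The core of the argument is the single equivalence
$$ u\circ s = u\circ s' \Longleftrightarrow S_u * s = S_u * s', \qquad s,s'\in G, $$
which I would derive from one application of the right-action axioms. Starting from $u\circ s = u\circ s'$ and composing with $s^{-1}$ on the right, the identity $(u\circ s)\circ s^{-1} = u\circ(s * s^{-1}) = u\circ e = u$ gives $u = u\circ(s' * s^{-1})$, that is $s' * s^{-1}\in S_u$, which is exactly the condition $S_u * s = S_u * s'$. This one equivalence simultaneously establishes that $\Phi$ is well-defined (the reverse implication, so the value $u\circ s$ does not depend on the chosen representative of the coset) and injective (the forward implication). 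Surjectivity of $\Phi$ is immediate, since by the very definition of the orbit every element of $\orb u$ has the form $u\circ s$ for some $s\in G$.

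Consequently $\Phi$ is a bijection, so $\card{\orb u}$ equals the number of right cosets of $S_u$ in $G$, which is $\card G / s_u$ by Lagrange's theorem, as claimed. The statement carries no genuine obstacle, being the classical orbit--stabilizer theorem; the only point demanding care is fidelity to the right-action convention $(u\circ g)\circ h = u\circ(g * h)$ adopted above. This convention dictates that one must partition $G$ into the right cosets $S_u * s$ rather than left cosets, and it fixes the side on which the inverse $s^{-1}$ enters the coset condition. Using the left-coset partition instead would make the analogous map ill-defined, so the coherence between the chosen action side and the chosen coset side is the single ingredient that must be checked with attention.
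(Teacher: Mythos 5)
Your proof is correct and follows essentially the same route as the paper, which simply asserts the bijection between the cosets of $S_u$ and the orbit $\orb u$; you have filled in the details of that bijection (well-definedness, injectivity, surjectivity, and the application of Lagrange's theorem). Your attention to the right-coset convention forced by the right-action axiom is a sound refinement of what the paper leaves implicit.
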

\begin{proof}
	There is a bijection from $G/S_u$ onto $\orb u$ the orbit of $u$.
\end{proof}
\begin{lemma}[Schreier]
	\label{SCHREIER}
	Let $L$ be a set of generators
	of a finite group $G$ right acting on 
	a finite set $U$. Let $\orb u$  be the orbit of some
	element $u\in U$.  If $R\colon \orb u \rightarrow G$ 
	is a map such that $u\circ R(x)  = x$ for all $x\in\orb u$
	then $\{ R(x)\lambda R(x\circ\lambda)^{-1}  \mid \lambda\in L,  x\in \orb u \}$
	generates the stabilizer $S_u$  of $u$.
\end{lemma}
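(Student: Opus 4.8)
The plan is to set $T=\{R(x)\lambda R(x\circ\lambda)^{-1}\mid \lambda\in L,\ x\in\orb u\}$ and establish the two inclusions $\langle T\rangle\subseteq S_u$ and $S_u\subseteq\langle T\rangle$ separately.

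For the first inclusion I would simply verify that each proposed generator fixes $u$. Reading the product as a single group element and applying the action axioms, $u\circ\bigl(R(x)\lambda R(x\circ\lambda)^{-1}\bigr)=\bigl((u\circ R(x))\circ\lambda\bigr)\circ R(x\circ\lambda)^{-1}=(x\circ\lambda)\circ R(x\circ\lambda)^{-1}$. Since $R$ satisfies $u\circ R(x\circ\lambda)=x\circ\lambda$, the rightmost factor undoes the first two and returns $u$; hence $T\subseteq S_u$ and therefore $\langle T\rangle\subseteq S_u$.

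For the reverse inclusion I would use a telescoping argument. Given $g\in S_u$, write $g=\lambda_1\lambda_2\cdots\lambda_n$ as a word in $L$ and track the trajectory $x_0=u$, $x_i=x_{i-1}\circ\lambda_i$; because $g$ stabilizes $u$ we get $x_n=u\circ g=u$. Each factor $s_i=R(x_{i-1})\lambda_i R(x_i)^{-1}$ belongs to $T$, and in the product $s_1s_2\cdots s_n$ every consecutive pair $R(x_i)^{-1}R(x_i)$ cancels, leaving $s_1\cdots s_n=R(x_0)\,g\,R(x_n)^{-1}=R(u)\,g\,R(u)^{-1}$. If the transversal is normalized so that $R(u)=e$, this already exhibits $g$ as a product of elements of $T$; in general $R(u)\in S_u$, and applying the same telescoping to a word for $R(u)$ shows $R(u)\in\langle T\rangle$, whence $g=R(u)^{-1}\bigl(R(u)gR(u)^{-1}\bigr)R(u)\in\langle T\rangle$.

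The main point requiring care is that a word for $g$ may involve an inverse $\lambda^{-1}$ of a generator when $L$ is not symmetric. I would handle this by noting that a step of the form $R(x)\lambda^{-1}R(x\circ\lambda^{-1})^{-1}$ is exactly the inverse of the element $R(x\circ\lambda^{-1})\,\lambda\,R(x)^{-1}\in T$ (set $y=x\circ\lambda^{-1}$, so that $y\circ\lambda=x$), hence still lies in $\langle T\rangle$. With this observation the telescoping goes through verbatim, which is the crux of the argument; the remaining verifications are routine applications of the action axioms.
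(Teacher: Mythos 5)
Your argument is correct and complete. The paper itself gives no proof of this lemma --- it simply points to Seress's book --- so your write-up supplies the standard telescoping argument that the reference contains. Both directions check out: the verification that each $R(x)\lambda R(x\circ\lambda)^{-1}$ fixes $u$ is exactly right, and the telescoping $s_1\cdots s_n=R(u)\,g\,R(u)^{-1}$ followed by the observation that $R(u)\in\langle T\rangle$ (obtained by running the same telescoping on a word for $R(u)$ itself, which yields $R(u)\cdot R(u)\cdot R(u)^{-1}=R(u)$) correctly removes the usual normalization hypothesis $R(u)=e$. Your handling of inverse generators --- identifying $R(x)\lambda^{-1}R(x\circ\lambda^{-1})^{-1}$ as the inverse of the Schreier generator attached to $y=x\circ\lambda^{-1}$ --- is precisely the point most expositions gloss over, and it is needed here since the lemma as stated does not assume $L$ is symmetric. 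The only stylistic remark: you could shorten the last step by noting that $g\in\langle T\rangle$ iff $R(u)gR(u)^{-1}\in\langle T\rangle$ once $R(u)\in\langle T\rangle$ is known, but the conjugation identity you wrote achieves the same thing.
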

\begin{proof}
	See \cite{SERESS}.	
\end{proof}

Knowing the value $s_u$,  one can build a generator set 
of its stablizer $S_u$ applying  Schreier's Lemma. We implement 
this idea in the algorithm \texttt{generatorSet} where $\ast$ denotes
the law group and $\circ$ denotes the action of the group.

\begin{center}
	\begin{lstlisting}[float,style=gcc, xleftmargin={20mm},caption={Construction of a generator set of $S_u$.},linewidth={110mm},numbers=left, emph={pop,push,empty}]
Algorithm  generatorSet( u , L, :su:  )
{	// return a generator set of the stabilizer of u
	// under the action of the group generated by L
	// knowing its order :su:
	S := :emptyset:
	push( u )
	R [ u ] :=  id
	Y :=  { u } 
	while (  order( <S> ) < :su:  ) {
		pop( x )
		for :lamb: :in: L { 
			y :=  x :rond: :lamb:
			if y :notin: Y {
				push(y)
				R[ y ] := R[ x ] * :lamb:
				Y:= Y :cup: {y} 
			} else {
			  s :=  R[x] * :lamb: * inverse( R[ y ] )
			  if ( s not in <S> )
			     S := S :cup: { s }
			 }
		}
    }
	return S;
}
\end{lstlisting}
\end{center}

Now, we describe our descending procedure based on Lemma \ref{BOUNDARY}
and Lemma \ref{SCHREIER} to construct a set of  orbit representatives at level $r-1$ from level $r$. 
In view of dimension of forms  space  $\boole(r,r,m)$ and to save memory space, 
we proceed in two phases~:
\begin{enumerate}
	\item For each representative $f$ at level $r$, we use 
         a classical algorithm to enumerate an orbit representatives set 
	of $\boole(r,r ,m)$ under the action of $\stableveldim(r,m,f)$. For each
	representative $u$,  we obtain  the orbit $\orb u$, and by Lemma \ref{CLASS}, 
	the order $s_u$ of $\stableveldim(r-1, m, f+u)$ is equal to 
	$\sharp \stableveldim(r, m, f ) / \sharp \orb u$. 
\item  For each representative $f$ at level $r$, let $L$ be a generator set of 
	$\stableveldim(r,m,f)$. For each pair $(u, s_u)$, obtained in (1), 
	we apply \texttt{generatorSet($u$, $L$, $s_u$)}  to construct
	a set of generators of  $\stableveldim(r-1,m,f+u)$.
\end{enumerate}

\section{results and applications}

Our implementation in C language of the descending procedure,
without any parallelization,  builds the full classification 
of $\boole(2,6,6)$ in 15 secondes. It classifies 
$\boole(3,4,7)$ in three days by requiring about 50GB of memory. 

The values of $\cls(s,t,7)$ for $0 \leq s\leq t\leq 7$ are listed 
in Table \ref{CLASSNUMBER}. For all parameters $0\leq s\leq t\leq 7$
such that  $\cls(s , t , 7) < 10^6$, the descending procedure
classifies $B(s,t,7)$, it computes for each orbit, a representative 
and also a generator sets of the corresponding stabilizer. 
All the
numerical data are available in project page \cite{PROJECT}. In the next subsections,
we focus on applications of the classifications of $\boole(3,4,7)$  
and $\boole(4,7,7)$.

\begin{table}
	\caption{\label{CLASSNUMBER}Class numbers $\cls(s,t,7)$.}
\begin{tabular}{|c|ccccccc|}
	\hline
$s\backslash t$  &1 &2 &3 &4 &5 &6 &7\\
\hline
\hline
0   &3 &12 &3486  &\pow13.5 &\pow19.8 &\pow21.9 &\pow22.2\\
1   &2 &8 &1890   &\pow13.1 &\pow19.5 &\pow21.6 &\pow21.9\\
2   &      &4 &179 &\pow 11.0   &\pow17.3  &\pow19.5 &\pow19.8\\
3   &      &      &12  &68443 &\pow11.0   &\pow13.1  &\pow13.5\\
4   &      &      &        &12    &179    &1890  &3486\\
	5   &      &      &        &          &4       &8     &12\\
6   &      &      &        &          &           &2     &3\\
7   &      &      &        &          &           &          &2\\
\hline
\end{tabular}
\end{table}

\subsection{Using invariant}

An alternative way to build a list of orbit representatives is 
to use  invariants. Success for invariant based approach is not guaranteed
for two reasons~: small orbits are hidden and difficult to detect, and
the invariants used may not be discriminating enough ! Moreover, invariant
approach does not give orbit sizes and even less the generator
set of stabilizers. The invariant approach proposed in \cite{IACRNOTE} failed to find a list of
representatives of $\boole(3,4,7)$. In that case,  the number of orbits is $\cls(3,4,7)= 68433$ 
and using invariants, the authors got 68095 classes whence missing 338 orbits. 

\subsection{Counting near bent functions}

Let us recall that a 7-bit Boolean function is near bent when its Walsh spectrum 
takes three values 0, $\pm 16$. Such a function has degree
less or equal to 4. The set of near bent functions is invariant under the action of affine general linear group. From the classification of $\boole(3,4,7)$, 
it is possible to count the number of near bent functions. For each $f\in \boole(3,4,7)$, we determine the number $N(f)$ of quadratic forms $q\in \boole(2,2,7)$ such that $f+q$ is near bent.
By this naive approach, one find
the total number of near-bent functions in seven variables:

$$
	\sum_{f\in \boole(3,4,7)/ \level(2)}  N(f) \times \frac{\sharp \aglm}{\sharp\stableveldim(2,7,f) }=88624918554694407235840 \approx 2^{76.3}
$$

In Table \ref{STAB}, we can read the number of classes of $\boole(s,t,m)$ whose the stabilizer has a small order. For example, there are 50308 classes of $\boole(3,4,7)$ with a stabilizer of order 1 that represents 74\% of classes.
\begin{table}[h]
	\caption{\label{STAB}Multiplicities of small order stablizers.}
	\begin{tabular}{|l|r|r|r|r|r|r|r|r|r|r|}
\hline
	order &1    &2 &3 &4 &6 &7 &8 &12 &14 &16\\
\hline
\hline
   $\boole(3,4,7)$   &50308& 9591& 134& 3059& 235& 12& 1877& 163& 15& 895\\
  $\boole(4,7,7)$   &389  &571 &7 &444 &48 &3 &384 &68 &7&236\\
\hline
\end{tabular}
\end{table}
It is not reasonable 
to store the  full classfication of $\boole(2,4,7)$ 
simply because  the number of classes 
is huge ~: $\cls(2,4,7) = 118 140 881 980$.
However, we can adapt the descending method to classify the set of near bent functions $f+q$ where $\sharp\stableveldim(2,7,f)>1$.
Finally, we obtain  4243482 classes of near bent functions in $\boole(2,4,7)$. Note that
 99.2\% of classes have a trivial stabilizer.  The classification
of the near bent function of $\boole(2,4,7)$ is availble on the website 
of the project. We hope that all the data presented here can be used to 
answer the following
open problems :

\begin{open} 
	It is well known that the restriction to any hyperplane
	of a bent function is near bent. 
	Are all the near bent function a restriction of a bent function~?
\end{open}

\begin{open} 
	As suggested in note \cite{IACRNOTE},  
	is it feasible to count/classify the 8-bit bent
	function from the classification 7-bit near bent functions ?
\end{open}

\subsection{Covering radius of \rmc(3,7)}

In 2019,  Wang  \cite {WANG} proved  that the covering radius of
$\rmc(2,7)$ is equal to 40. A part of that proof, is based
on the classification of $\boole(2,6,6)$.  The covering radius
of $\rmc(3,7)$ into $\rmc(4,7)$ is known to be 20 see \cite{MR4395429}. 
In the recent preprint \cite{COVERING}, Gao, Kan, Li and Wang showed 
the covering radius of $\rmc(3,7)$ is less or equal to 20 using the 
classification of $\boole(4,6,6)$.  All these
methods use more or less computer assistance. Here, we point out how to use
directely the classification of $\boole(4,7,7)$ to obtain
that the covering radius of $\rmc(3,7)$ is less or equal to $20$.  The key point is to use a variation
of Leon's algorithm to exibit small weight codewords in
the translate of a code.

Given the generator matrix $G$ of an $[n,k]$- Reed-Muller code
$C$, the algorithm \texttt{distance(f, G, T)} 
applies a random procedure to check the existence of a Boolean function
of weight less or equal to $T$ in the translate code $f+C$. This algorithm uses three components :
\begin{itemize}
 \item \texttt{action(f)} returns a random action of $\aglm$ on \texttt{f}
 \item \texttt{pivoting(G)} applies Gauss elimination algorithm to the generator matrix  \texttt{G} choosing a random pivot on each of its line. Each line of the matrix obtained has weight less or equal to $n-k+1$
 \item \texttt{reduce(g,G)} transforms \texttt{g} adding to it the lines of \texttt{G} corresponding of the pivot position. More precisely, for each line $L_i$ of \texttt{G}, let us denote $p_i$ the position of the pivot on this line, \texttt{reduce} adds to \texttt{g} the line $L_i$ when \texttt{g}$(p_i)=1$. It appears that the weight of \texttt{g} after reduction is at most $n-k$.
\end{itemize}
The algorithm finishes when it finds a Boolean function \texttt{g} in the translate code of weight less or equal to \texttt{T} or when the number of trials exceeds an arbitrary limit \texttt{maxIter}.

We apply \texttt{distance(f, G, T)} to each representative of $\boole(4,4,7)$ to prove the
non-existence of Boolean functions  at 
distance greater than 20 from $\rmc(3,7)$. This work requires
an average of 538.6 trials with standard deviation
806.17.

\begin{center}
	\begin{lstlisting}[float,style=gcc, xleftmargin={20mm},caption={Counting trials to find a small cosetword.},linewidth={110mm},numbers=left, emph={pop,push,empty}]
maxIter = 2048
Algorithm  distance( f, G, T )
{
// G  generator matrix of a [n,k]-Reed-Muller code
	score  = n
	trials = 0
	while ( score > T ) and ( trails < maxIter) {
		g := action( f )
		pivoting( G )
		reduce( g, G )
		w := weight( g )
		if ( w < score )
			score := w
		trials = trials + 1
	}
	return trials;
}
\end{lstlisting}
\end{center}

\begin{open}
	In \cite{MR4395429}, the covering radius of $\rmc(4,8)$ in $\rmc(5,8)$ 
	is shown to be 26. Is it possible to build a classification of $\boole(5,6,8)$ and
	to apply similar methods in order to  determine the covering radius 
	of $\rmc(4,8)$ or at least in $\rmc(6,8)$ ?
\end{open}

\section{Conclusion}

We present an efficient descending method to classify the cosets of Reed-Muller codes. 
This procedure allow us to obtain the classification of two important
cosets of length 128~: $\rmq(4,7,2)$ and $\rmq(7,7,3)$. The first one
provides the classification of near  bent functions in seven variable. From the
the second, we explain how we refind the value of the covering radius of $\rmc(3,7)$.


\bibliographystyle{plain} 

\begin{thebibliography}{1}

\bibitem{MR4395429}
Randall Dougherty, R.~Daniel Mauldin, and Mark Tiefenbruck.
\newblock The covering radius of the {R}eed-{M}uller code {$RM(m-4,m)$} in
  {$RM(m-3,m)$}.
\newblock {\em IEEE Trans. Inform. Theory}, 68(1):560--571, 2022.

\bibitem{COVERING}
J.~Gao, H.~Kan, Y.~Li, and Q.~Wang.
\newblock The covering radius of the third-order reed-muller codes ${\rm
  rm}(3,7)$ is 20.
\newblock {\em submitted to IEEE IT}, 2023.

\bibitem{PROJECT}
Valérie Gillot and Philippe Langevin.
\newblock Classification of $b(s,t,7)$.
\newblock \url{http://langevin.univ-tln.fr/project/agl7/aglclass.html}, 2022.

\bibitem{AGLHOU}
Xiang-Dong Hou.
\newblock {${\rm AGL}(m,2)$} acting on {$R(r,m)/R(s,m)$}.
\newblock {\em J. Algebra}, 171(3):921--938, 1995.

\bibitem{BOOK}
Xiang-Dong Hou.
\newblock {\em Lectures on finite fields}, volume 190 of {\em Graduate Studies
  in Mathematics}.
\newblock American Mathematical Society, Providence, RI, 2018.

\bibitem{MAIORANA}
James~A. Maiorana.
\newblock A classification of the cosets of the {R}eed-{M}uller code
  {$R(1,6)$}.
\newblock {\em Math. Comp.}, 57(195):403--414, 1991.

\bibitem{IACRNOTE}
Meng Qingshu, Zhang Huanguo, Cui Jingsong, and Yang Min.
\newblock Almost enumeration of eight-variable bent functions.
\newblock {\em iacr preprint}, 2005.

\bibitem{SERESS}
\'{A}kos Seress.
\newblock {\em Permutation group algorithms}, volume 152 of {\em Cambridge
  Tracts in Mathematics}.
\newblock Cambridge University Press, Cambridge, 2003.

\bibitem{WANG}
Qichun Wang.
\newblock The covering radius of the {R}eed-{M}uller code {$RM(2,7)$} is 40.
\newblock {\em Discrete Math.}, 342(12):111625, 7, 2019.

\end{thebibliography}

\end{document}